\newtheorem{theorem}{\indent Theorem}[section]
\newtheorem{lemma}[theorem]{\indent Lemma}
\newtheorem{corollary}[theorem]{\indent Corollary}
\newtheorem{remark}[theorem]{\indent Remark}
\newtheorem{proposition}[theorem]{\indent Proposition}
\newtheorem{EXAMPLE}{\indent Example}[section]
\newcommand{\define}{\stackrel{\mbox{\tiny $\triangle$}}{=}}
\newcommand{\AWGNC}{\mbox{\tiny AWGNC}}
\newcommand{\BSC}{\mbox{\tiny BSC}}
\newcommand{\BEC}{\mbox{\tiny BEC}}
\newcommand{\maxfrac}{\mbox{\tiny max-frac}}
\newcommand{\code}{\mathcal{C}}
\newcommand{\cL}{{\mbox{\boldmath $L$}}}
\newcommand{\cHsmall}{{\mbox{\scriptsize \boldmath $H$}}}
\newcommand{\cH}{{\mbox{\boldmath $H$}}}
\newcommand{\bldH}{{\mbox{\boldmath $H$}}}
\newcommand{\cC}{{\mathcal{C}}}
\newcommand{\cI}{{\mathcal{I}}}
\newcommand{\cJ}{{\mathcal{J}}}
\newcommand{\cK}{{\mathcal{K}}}
\newcommand{\cS}{{\mathcal{S}}}
\newcommand{\bH}{{\mbox{\boldmath $H$}}}
\newcommand{\w}{{\mathsf{w}}}
\newcommand{\weight}{{\mathsf{w}}}
\newcommand{\ff}{{\mathbb{F}}}
\newcommand\rr{{\mathbb R}}
\newcommand\nn{{\mathbb N}}
\newcommand{\bldc}{{\mbox{\boldmath $c$}}}
\newcommand{\bldh}{{\mbox{\boldmath $h$}}}
\newcommand{\bldi}{{\mbox{\boldmath $i$}}}
\newcommand{\bldx}{{\mbox{\boldmath $x$}}}
\newcommand{\bldxx}{{\mbox{\scriptsize \boldmath $x$}}}
\newcommand{\ones}{{\mbox{\boldmath $1$}}}
\newcommand{\smallzeros}{{\mbox{\scriptsize \boldmath $0$}}}
\newcommand{\zeros}{{\mbox{\boldmath $0$}}}
\newcommand{\bldzero}{{\mbox{\boldmath $0$}}}
\newcommand{\GL}{\operatorname{GL}}
    \def\squarebox#1{\hbox to #1{\hfill\vbox to #1{\vfill}}}
\newcommand{\mat}[1]{\left[\ \ \begin{matrix}#1\end{matrix}\;\ \right]}
\newcommand{\zo}{\mbox{\footnotesize \!$0$\!\! \normalsize}}
\newcommand{\ze}{\mbox{\footnotesize \!$\mathbf{1}$\!\! \normalsize}}
\newlength{\Algwidth}
\title{\LARGE\bf Exploration of AWGNC and BSC Pseudocodeword Redundancy}
\author{Jens Zumbr\"agel, Mark F. Flanagan, and Vitaly Skachek%
  \thanks{This work was supported in part by the Claude Shannon
    Institute for Discrete Mathematics, Coding and Cryptography
    (Science Foundation Ireland Grant 06/MI/006). The work of
    V. Skachek was done in part while he was with the Claude Shannon
    Institute, University College Dublin. His work was also supported
    in part by the National Research Foundation of Singapore (Research
    Grant NRF-CRP2-2007-03).}
  \thanks{J. Zumbr\"agel and M.F. Flanagan are with the Claude Shannon
    Institute, University College Dublin, Belfield, Dublin 4, Ireland.
    Emails: {\tt\small jens.zumbragel@ucd.ie}, {\tt\small
      mark.flanagan@ieee.org}.}%
  \thanks{V. Skachek is with Division of Mathematical Sciences, School
    of Physical and Mathematical Sciences, Nanyang Technological
    University, 21 Nanyang Link, 637371 Singapore.  Email: {\tt\small
      vitaly.skachek@ntu.edu.sg}.}%
}
\begin{document}


\maketitle
\thispagestyle{empty}
\pagestyle{empty}

\begin{abstract}
  The AWGNC, BSC, and max-fractional \emph{pseudocodeword redundancy}
  $\rho(\code)$ of a code $\code$ is defined as the smallest number of
  rows in a parity-check matrix such that the corresponding minimum
  pseudoweight is equal to the minimum Hamming distance of
  $\code$. This paper provides new results on the AWGNC, BSC, and
  max-fractional pseudocodeword redundancies of codes. The
  pseudocodeword redundancies for all codes of small length (at most
  $9$) are computed. Also, comprehensive results are provided on the
  cases of cyclic codes of length at most $250$ for which the
  eigenvalue bound of Vontobel and Koetter is sharp.
\end{abstract}


\section{Introduction}

Pseudocodewords play a significant role in the finite-length analysis
of binary linear low-density parity-check (LDPC) codes under
linear-programming (LP) or message-passing (MP) decoding (see
e.g.~\cite{KV-characterization, KV-long-paper}).  The concept of
\emph{pseudoweight} of a pseudocodeword was introduced in~\cite{FKKR}
as an analog to the pertinent parameter in the maximum likelihood (ML)
decoding scenario, i.e. the signal Euclidean distance in the case of
the additive white Gaussian noise channel (AWGNC), or the Hamming
distance in the case of the binary symmetric channel (BSC).
Accordingly, for a binary linear code $\code$ and a parity-check
matrix $\bldH$ of $\code$, the (AWGNC or BSC) minimum pseudoweight
$\w^{\min}(\bldH)$ may be considered as a first-order measure of
decoder error-correcting performance for LP or MP decoding. Note that
$\w^{\min}(\bldH)$ may be different for different matrices~$\bldH$:
adding redundant rows to~$\bldH$ introduces additional constraints on
the so-called \emph{fundamental cone} and may thus increase the
minimum pseudoweight.  Another closely related measure is the
max-fractional weight (pseudoweight).  It serves as a lower bound on
both AWGNC and BSC pseudoweights.

The AWGNC (or BSC) pseudocodeword redundancy $\rho_{\AWGNC}(\code)$
(or $\rho_{\BSC}(\code)$, respectively) of a code $\code$ is defined
as the minimum number of rows in a parity-check matrix $\bldH$ such
that the corresponding minimum pseudoweight $\w^{\min}(\bldH)$ is as
large as its minimum Hamming distance $d$. It is set to infinity if
there is no such matrix. We sometimes simply write $\rho(\code)$, when
the type of the channel is clear from the context.

The pseudocodeword redundancy for the binary erasure channel (BEC),
$\rho_{\BEC}(\code)$, was studied in~\cite{Schwartz-Vardy}, where it
was shown to be finite for any binary linear code $\code$. The authors
also presented some bounds on $\rho_{\BEC}(\code)$ for general linear
codes, and for some specific families of codes.  The study of BSC
pseudoredundancy was initiated in~\cite{Kelley-Sridhara}, where the
authors presented bounds on $\rho_{\BSC}(\code)$ for various families
of codes.  In a recent work~\cite{ISIT-Paper}, we provided some bounds
on $\rho_{\AWGNC}(\code)$ and $\rho_{\BSC}(\code)$ for general linear
codes. In particular, \cite{ISIT-Paper} listed some preliminary
results regarding the AWGNC and BSC pseudocodeword redundancies of
short codes; this paper provides more comprehensive results in this
direction.

The outline of the paper is as follows.  After providing detailed
definitions in Section~\ref{sec:general_settings} we prove several new
theoretical results on the pseudocodeword redundancy in
Sections~\ref{sec:basic_results} and~\ref{sec:rows_weight_2}.  The
next two sections are devoted to experimental results;
Section~\ref{sec:short_codes} examines the pseudocodeword redundancy
for all codes of small length, and Section~\ref{sec:kv_bound} deals
with cyclic codes that meet the eigenvalue bound of Vontobel and
Koetter.


\section{General Settings}\label{sec:general_settings}

Let $\code$ be a code of length $n \in \nn$ over the binary field
$\ff_2$, defined by
\begin{equation}\label{eq:code_definition}
  \code = \ker \cH
  = \{ \bldc \in \ff_2^n \; : \; \cH \bldc^T = \bldzero^T \}
\end{equation}
where $\cH$ is an $m \times n$ \emph{parity-check matrix} of the code
$\code$. Obviously, the code $\code$ may admit more than one
parity-check matrix, and all the codewords form a linear vector space
of dimension $k \ge n-m$. We say that $k$ is the \emph{dimension} of
the code $\code$.  We denote by $d(\code)$ (or just $d$) the minimum
Hamming distance (also called the minimum distance) of $\code$. The
code $\code$ may then be referred to as an $[n,k,d]$ linear code over
$\ff_2$. 


The parity-check matrix $\cH$ is said to be $(w_c,w_r)$-regular if
every column of $\cH$ has exactly $w_c$ nonzero symbols, and every row
of it has exactly $w_r$ nonzeros. The matrix $\cH$ is called
$w$-regular if every row and every column in it has $w$ nonzeros.

Denote the set of column indices and the set of row indices of $\cH$
by $\cI = \{1, 2, \dots, n \}$ and $\cJ = \{1, 2, \dots, m \}$,
respectively. For $j \in \cJ$, we denote $\cI_j \define \{ i \in \cI
\; : \; H_{j,i} \neq 0 \}$, and for $i \in \cI$, we denote $\cJ_i
\define \{ j \in \cJ \; : \; H_{j,i} \neq 0 \}$. The \emph{fundamental
  cone} of $\cH$, denoted $\cK(\cH)$, is defined
in~\cite{Feldman-et-al} and~\cite{KV-long-paper} as the set of vectors
$\bldx \in \rr^n$ that satisfy
\begin{equation}\label{eq:polytope-inequality-1}
  \forall j \in \cJ, \; \forall \ell \in \cI_j  \; : 
  \; x_\ell \le \sum_{i \in \cI_j \backslash \{ \ell \}} x_i \; ,
\end{equation}
\begin{equation}\label{eq:polytope-inequality-2}
  \forall i \in \cI \; : \; x_i \ge 0 \; .
\end{equation}

The vectors $\bldx\in\rr^n$
satisfying~(\ref{eq:polytope-inequality-1})
and~(\ref{eq:polytope-inequality-2}) are called \emph{pseudocodewords}
of $\code$ with respect to the parity-check matrix $\cH$.  Note that
the fundamental cone $\cK(\cH)$ depends on the parity-check matrix
$\cH$ rather than on the code $\code$ itself.  At the same time, the
fundamental cone is independent of the underlying communication
channel.

The BEC, AWGNC, BSC pseudoweights and max-fractional weight of a
nonzero pseudocodeword $\bldx \in \cK(\cH)$ were defined
in~\cite{FKKR} and~\cite{KV-long-paper} as follows:
\begin{align*}
  \weight_{\BEC} (\bldx) & \,\define\, 
  \left| \mbox{supp} ( \bldx ) \right| \; , \\
  \weight_{\AWGNC} (\bldx) & \,\define\,  
  \frac{\left( \sum_{i \in \cI} x_i \right)^2}{\sum_{i \in \cI} x_i^2}\;.
\end{align*}
Let $\bldx'$ be a vector in $\rr^n$ with the same components as
$\bldx$ but in non-increasing order.  For $i-1 < \xi \le i$, where $1
\le i \le n$, let $\phi(\xi) \stackrel{\triangle}{=} x'_i$. Define
$\Phi(\xi) \define \int_{0}^{\xi} \phi(\xi') \; d \xi'$ and \[
\weight_{\BSC}(\bldx) \define 2\, \Phi^{-1} ( \Phi(n)/2 ) \; . \] 
Finally, the max-fractional weight of $\bldx$ is defined as
\[\weight_{\maxfrac} (\bldx) \,\define\,
\frac{\sum_{i \in \cI} x_i}{\max_{i \in \cI} x_i} \; .\]

We define the BEC \emph{minimum pseudoweight} of the code $\code$ with
respect to the parity-check matrix $\cH$ as
\[ \weight_{\BEC}^{\min} (\cH) \,\define\, \min_{\bldxx \in \cK(\cHsmall)
  \backslash \{ \smallzeros \} } \weight_{\BEC} (\bldx) \; .\] The
quantities $\weight_{\AWGNC}^{\min} (\cH) $, $\weight_{\BSC}^{\min}
(\cH) $ and $\weight_{\maxfrac}^{\min} (\cH)$ are defined
similarly. 
When the type of pseudoweight is clear from the context, 
we might use the notation $\weight^{\min} (\cH)$.
Note that all four minimum pseudoweights are upper bounded
by~$d$, the code's minimum distance.

Then we define the BEC \emph{pseudocodeword redundancy} of the code
$\code$ as
\[ \rho_{\BEC}(\code) \,\define\, \inf\{\#\text{rows}(\cH) \mid
\ker\cH=\code\,,\, \weight_{\BEC}^{\min}(\cH)=d\} \: ,\] where
$\inf\varnothing\define\infty$, and similarly we define the
pseudocodeword redundancies $\rho_{\AWGNC}(\code)$,
$\rho_{\BSC}(\code)$ and $\rho_{\maxfrac}(\code)$ for the AWGNC and
BSC pseudoweights, and the max-fractional weight.  When the type of
pseudocodeword redundancy is clear from the context, we might use the
notation $\rho(\code)$. We remark that all pseudocodeword redundancies
satisfy $\rho(\code) \ge r\define n-k$.

We describe the behavior of the pseudocodeword redundancy and the
minimum pseudoweight for a given binary linear $[n,k,d]$ code~$\cC$
by introducing four classes of codes:\smallskip

\begin{description}\setlength{\parsep}{0pt}\setlength{\itemsep}{4pt}
\item[{\bf (class 0)}] \quad $\rho(\cC)$ is infinite, i.e.\ there is
  no parity-check matrix $\bH$ with $d=\w^{\min}(\bH)$,
\item[{\bf (class 1)}] \quad $\rho(\cC)$ is finite, but $\rho(\cC)>r$,
\item[{\bf (class 2)}] \quad $\rho(\cC)=r$, but $\cC$ is not in
  class 3,
\item[{\bf (class 3)}] \quad $d=\w^{\min}(\bH)$ for \emph{every}
  parity-check matrix $\bH$ of $\cC$.
\end{description}


\section{Basic Results}\label{sec:basic_results}

The next lemma is taken from~\cite{KV-long-paper}.

\begin{lemma}\label{lemma:relations}
  Let $\code$ be a binary linear code with the parity-check matrix
  $\cH$.  Then,
  \begin{eqnarray*}
    &\weight_{\maxfrac}^{\min} (\cH) \; \le \; 
    \weight_{\AWGNC}^{\min} (\cH) \; \le \; \weight_{\BEC}^{\min} (\cH) \; , \\
    &\weight_{\maxfrac}^{\min} (\cH) \; \le \; 
    \weight_{\BSC}^{\min} (\cH) \; \le \; \weight_{\BEC}^{\min} (\cH) \; . 
  \end{eqnarray*}
\end{lemma}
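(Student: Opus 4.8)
The plan is to establish each of the four inequalities \emph{pointwise}, i.e.\ for every fixed nonzero pseudocodeword $\bldx \in \cK(\cH)$, and then pass to the minima. This reduction is immediate: if $\weight_A(\bldx) \le \weight_B(\bldx)$ holds for all $\bldx \in \cK(\cH)\setminus\{\bldzero\}$, then evaluating the left side at a minimizer of $\weight_B$ yields $\weight_A^{\min}(\cH) \le \weight_A(\bldx^*) \le \weight_B(\bldx^*) = \weight_B^{\min}(\cH)$. Throughout I use that every $\bldx \in \cK(\cH)$ has nonnegative entries by~(\ref{eq:polytope-inequality-2}), and I abbreviate $s \define \sum_{i\in\cI} x_i$, $M \define \max_{i\in\cI} x_i$, and $w \define |\mbox{supp}(\bldx)|$.

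For the AWGNC chain I would argue as follows. The bound $\weight_{\maxfrac}(\bldx) \le \weight_{\AWGNC}(\bldx)$ amounts to $s/M \le s^2/\sum_i x_i^2$, i.e.\ to $\sum_i x_i^2 \le Ms$, which is clear since $x_i \le M$ forces $x_i^2 \le Mx_i$ for each $i$. The bound $\weight_{\AWGNC}(\bldx) \le \weight_{\BEC}(\bldx)$ reads $s^2/\sum_i x_i^2 \le w$ and follows from Cauchy--Schwarz applied to $\bldx$ restricted to its support against the all-ones vector of length $w$, giving $s^2 \le w\sum_i x_i^2$.

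The BSC chain is where the real work lies. I would first record the structural properties of $\Phi$ that follow directly from its definition: since $\phi$ is the non-increasing step function built from the sorted entries $x'_1 \ge \cdots \ge x'_n$, the antiderivative $\Phi$ is continuous, piecewise linear, non-decreasing, and \emph{concave} on $[0,n]$, with $\Phi(0)=0$, with $\Phi(n)=s$, and with $\Phi$ constant (equal to $s$) for $\xi \ge w$. For $\weight_{\maxfrac}(\bldx) \le \weight_{\BSC}(\bldx)$, set $\tau \define \Phi^{-1}(s/2)$; from $\phi \le x'_1 = M$ one gets the linear majorant $\Phi(\xi) \le M\xi$, so $s/2 = \Phi(\tau) \le M\tau$, whence $2\tau \ge s/M$, which is exactly the claim. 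For $\weight_{\BSC}(\bldx) \le \weight_{\BEC}(\bldx)$ I would invoke concavity: $\Phi(w/2) \ge \tfrac{1}{2}\Phi(0) + \tfrac{1}{2}\Phi(w) = s/2 = \Phi(\tau)$, and since $\Phi$ is non-decreasing this gives $w/2 \ge \tau$, i.e.\ $w \ge 2\tau = \weight_{\BSC}(\bldx)$.

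The main obstacle is making the $\Phi$-based arguments fully rigorous: justifying that $\Phi^{-1}(s/2)$ is well defined (strict monotonicity of $\Phi$ on $[0,w]$, where all the $x'_i$ are positive), and checking that the linear-majorant and concavity estimates are valid at the possibly non-integer point $\tau$. Once these facts about $\Phi$ are in place, the two BSC inequalities drop out as above, and combining all four pointwise bounds with the reduction of the first paragraph completes the proof.
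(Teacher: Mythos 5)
Your proof is correct, but there is nothing in the paper to compare it against: the paper states Lemma~\ref{lemma:relations} as a known result taken from~\cite{KV-long-paper} and gives no proof of its own. Your argument is self-contained and follows the standard route (essentially that of the cited reference): reduce everything to pointwise inequalities on $\cK(\cH)\setminus\{\bldzero\}$, obtain the AWGNC chain from $x_i^2 \le M x_i$ and Cauchy--Schwarz, and obtain the BSC chain from the piecewise linearity, concavity, and monotonicity of $\Phi$. Two small points deserve care if this is written out in full. First, in deducing $w/2 \ge \tau$ from $\Phi(w/2) \ge \Phi(\tau)$, mere non-decrease of $\Phi$ does not suffice (a non-decreasing function need not invert inequalities); you need strict increase of $\Phi$ on $[0,w]$, equivalently uniqueness of $\tau\in[0,w]$ with $\Phi(\tau)=s/2$ --- you flag exactly this in your closing paragraph, so it is an acknowledged detail rather than a gap. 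Second, the reduction step speaks of evaluating at ``a minimizer of $\weight_B$''; attainment of the minimum is not actually needed, since the pointwise inequality $\weight_A(\bldx)\le\weight_B(\bldx)$ for all nonzero $\bldx\in\cK(\cH)$ passes directly to infima. A byproduct of your proof worth noting: it uses only the nonnegativity constraints~(\ref{eq:polytope-inequality-2}) and never the cone inequalities~(\ref{eq:polytope-inequality-1}), so all four inequalities in fact hold for arbitrary nonzero nonnegative vectors, not just pseudocodewords; what the citation buys the paper is brevity, while your argument buys self-containedness and makes this extra generality visible.
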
\smallskip

The following theorem is a straightforward corollary to
Lemma~\ref{lemma:relations}.

\begin{theorem}\label{thrm:pseudoredundancies}
  Let $\code$ be a binary linear code.  Then,
  \begin{eqnarray*}
    &\rho_{\maxfrac} (\code) \; \ge \; \rho_{\AWGNC} (\code) \; \ge \; 
    \rho_{\BEC} (\code) \; , \\
    &\rho_{\maxfrac} (\code) \; \ge \; \rho_{\BSC} (\code) \; \ge \; 
    \rho_{\BEC} (\code) \; .
  \end{eqnarray*}
\end{theorem}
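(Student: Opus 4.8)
The plan is to isolate the single monotonicity principle underlying all four inequalities and to apply it termwise to Lemma~\ref{lemma:relations}. Concretely, I would first establish the following claim: if $A$ and $B$ denote two of the four pseudoweight types and $\weight_A^{\min}(\cH) \le \weight_B^{\min}(\cH)$ holds for \emph{every} parity-check matrix $\cH$ of $\code$, then $\rho_A(\code) \ge \rho_B(\code)$. Granting this claim, the four pointwise inequalities of Lemma~\ref{lemma:relations} --- namely $\weight_{\maxfrac}^{\min} \le \weight_{\AWGNC}^{\min} \le \weight_{\BEC}^{\min}$ and $\weight_{\maxfrac}^{\min} \le \weight_{\BSC}^{\min} \le \weight_{\BEC}^{\min}$ --- translate immediately into the four asserted inequalities $\rho_{\maxfrac} \ge \rho_{\AWGNC} \ge \rho_{\BEC}$ and $\rho_{\maxfrac} \ge \rho_{\BSC} \ge \rho_{\BEC}$, completing the proof.

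To prove the claim I would use the fact, recorded just after the definition of the minimum pseudoweights, that every minimum pseudoweight is upper bounded by the minimum distance~$d$. This means the defining condition $\weight^{\min}(\cH) = d$ may equivalently be written as $\weight^{\min}(\cH) \ge d$. Now fix any $\cH$ with $\ker\cH = \code$ that is feasible for type $A$, i.e.\ $\weight_A^{\min}(\cH) = d$. Combining the pointwise hypothesis $\weight_A^{\min}(\cH) \le \weight_B^{\min}(\cH)$ with the upper bound $\weight_B^{\min}(\cH) \le d$ yields $d = \weight_A^{\min}(\cH) \le \weight_B^{\min}(\cH) \le d$, so $\weight_B^{\min}(\cH) = d$ and $\cH$ is also feasible for type $B$. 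Hence the set of matrices feasible for $A$ is contained in the set feasible for $B$; since $\rho_A(\code)$ and $\rho_B(\code)$ are the infima of the row count over these respective sets, the larger (type $B$) set yields the smaller infimum, giving $\rho_B(\code) \le \rho_A(\code)$.

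The argument carries essentially no hard content --- which is why the statement is labelled a corollary --- but the one point to get right is the direction of the reversal: a \emph{smaller} minimum pseudoweight makes the constraint $\weight^{\min}(\cH) = d$ harder to satisfy, and therefore corresponds to a \emph{larger} redundancy. I would also note that the convention $\inf\varnothing \define \infty$ requires no separate treatment: if type $A$ admits no feasible matrix then $\rho_A(\code) = \infty$ and every inequality of the form $\rho_A(\code) \ge \rho_B(\code)$ holds trivially, while the set-inclusion step above remains valid since the empty set is contained in every set.
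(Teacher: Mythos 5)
Your proof is correct and follows exactly the route the paper intends: the paper gives no explicit argument, stating only that the theorem is a straightforward corollary of Lemma~\ref{lemma:relations}, and your proposal fills in precisely that derivation---using the bound $\weight^{\min}(\cH)\le d$ to turn each pointwise pseudoweight inequality into a feasible-set inclusion, which reverses the inequality for the infima. The direction of the reversal and the $\inf\varnothing=\infty$ convention are both handled correctly.
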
\smallskip

The following results hold with respect to the AWGNC and BSC
pseudoweights, and the max-fractional weight.

\begin{lemma}\label{lemma:puncturing}
  Let $\cC$ be an $[n,k,d]$ code having $t$ zero coordinates, and let
  $\cC'$ be the $[n-t,k,d]$ code obtained by puncturing $\cC$ at these
  coordinates.  Then 
  \[ \rho(\cC')\le \rho(\cC)\le \rho(\cC')+t \;. \]
\end{lemma}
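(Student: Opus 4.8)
The plan is to prove each inequality by an explicit construction relating optimal parity-check matrices of $\cC$ and $\cC'$, together with one auxiliary observation. Assume without loss of generality that the $t$ zero coordinates of $\cC$ are the trailing ones. The observation is: if $\bldx=(\bldx',\zeros)\in\rr^n$ is obtained from $\bldx'\in\rr^{n-t}$ by appending zeros in these coordinates, then $\weight(\bldx)=\weight(\bldx')$ for each of the AWGNC, BSC, and max-fractional weights. For AWGNC and max-fractional this is immediate, since the appended zeros contribute nothing to $\sum_i x_i$, $\sum_i x_i^2$, or $\max_i x_i$. For the BSC weight one checks that sorting $\bldx$ in non-increasing order merely appends $t$ trailing zeros to the sorted $\bldx'$, so the cumulative function $\Phi$ agrees on $[0,n-t]$ with its counterpart for $\bldx'$ and $\Phi(n)=\Phi(n-t)$, whence $\Phi^{-1}(\Phi(n)/2)$ is unchanged; verifying this carefully is the one place that needs attention, and I expect it to be the main technical obstacle.

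For the upper bound $\rho(\cC)\le\rho(\cC')+t$, I would take a parity-check matrix $\cH'$ of $\cC'$ with $\rho(\cC')$ rows and $\weight^{\min}(\cH')=d$, and form $\cH=\mat{\cH' & \zeros \\ \zeros & \bldI}$, with $\bldI$ the $t\times t$ identity acting on the zero coordinates. A direct check gives $\ker\cH=\cC$, and $\cH$ has $\rho(\cC')+t$ rows. Each appended unit row has singleton support $\{i\}$, so its cone inequality reads $x_i\le 0$, which together with $x_i\ge 0$ forces $x_i=0$; hence every $\bldx\in\cK(\cH)$ has the form $(\bldx',\zeros)$, and the remaining rows are exactly those of $\cH'$ padded with zeros, so $\cK(\cH)=\{(\bldx',\zeros):\bldx'\in\cK(\cH')\}$. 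By the auxiliary observation $\weight^{\min}(\cH)=\weight^{\min}(\cH')=d$, so $\cH$ witnesses $\rho(\cC)\le\rho(\cC')+t$.

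For the lower bound $\rho(\cC')\le\rho(\cC)$, I would take an optimal $\cH$ for $\cC$ with $\rho(\cC)$ rows and $\weight^{\min}(\cH)=d$, and let $\cH'$ be obtained by deleting the $t$ columns indexed by the zero coordinates. Writing $\cH=[\cH'\mid\cH'']$, one verifies $\ker\cH'=\cC'$, and $\cH'$ has at most $\rho(\cC)$ rows (dropping any row whose support lay entirely in the deleted columns). The crucial step is $\weight^{\min}(\cH')\ge d$: for arbitrary $\bldx'\in\cK(\cH')$ I claim the lift $\bldx=(\bldx',\zeros)$ lies in $\cK(\cH)$. Checking the inequalities of $\cH$ row by row, for an index $\ell$ among the retained columns the inequality reduces exactly to the corresponding inequality of $\cK(\cH')$, while for an index $\ell$ among the zero coordinates the left-hand side is $x_\ell=0$ and the right-hand side is a sum of nonnegative entries; it is precisely the nonnegativity constraint~(\ref{eq:polytope-inequality-2}) that makes the lift valid even though $\cH''$ may be nonzero. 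Then $\weight(\bldx')=\weight(\bldx)\ge\weight^{\min}(\cH)=d$ by the observation, so $\weight^{\min}(\cH')\ge d$; since it is always at most $d$, equality holds and $\cH'$ witnesses $\rho(\cC')\le\rho(\cC)$.

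Thus the only genuinely delicate points are the invariance of the BSC pseudoweight under appending zero coordinates and the cone-lifting argument in the lower bound; everything else is routine bookkeeping on the cone inequalities and on the kernel.
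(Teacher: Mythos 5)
Your proof is correct and follows essentially the same route as the paper's: the same column-deletion plus zero-padding lift of pseudocodewords for the lower bound, and the same construction appending weight-$1$ rows (an identity block) on the zero coordinates for the upper bound. The only difference is cosmetic: you spell out the invariance of the AWGNC, BSC, and max-fractional weights under appending zero coordinates (including the $\Phi$-function check for the BSC), which the paper uses implicitly.
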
\smallskip

In the proof we use the following notation: We identify $\rr^n$ with
$\rr^{\cI}$, and for $x\in\rr^{\cI}$ and some subset
$\cI'\subseteq\cI$ we let $\bldx|_{\cI'}\in\rr^{\cI'}$ be the
projection of $\bldx$ onto the coordinates in $\cI'$.

\begin{proof}
  Let $\cI'\subseteq\cI$ be the set of nonzero coordinates of the code
  $\cC$.  To prove the first inequality, let $\bH$ be a $\rho\times n$
  parity-check matrix for $\cC$.  Consider its $\rho\times(n-t)$
  submatrix $\bH'$ consisting of the columns corresponding to $\cI'$.
  Then $\bH'$ is a parity-check matrix for $\cC'$, and \[\cK(\bH') =
  \{\bldx|_{\cI'} \,:\, \bldx\in\cK(\bH),\
  \bldx|_{\cI\setminus\cI'}=\bldzero\} \;.\] Therefore,
  $\w^{\min}(\bH')\ge \w^{\min}(\bH)$, and this proves
  $\rho(\cC')\le\rho(\cC)$.

  For the second inequality, let $\bH'$ be a $\rho'\times(n-t)$
  parity-check matrix for $\cC'$.  Now we consider a $(\rho'+t)\times
  n$ matrix $\bH$ with the following properties: The upper
  $\rho'\times n$ submatrix of $\bH$ consists of the columns of $\bH'$
  at positions $\cI'$ and of zero-columns at positions
  $\cI\setminus\cI'$, and the lower $t\times n$ submatrix consists of
  rows of weight~$1$ that have $1$s at the positions
  $\cI\setminus\cI'$.  Then $\cC=\ker\bH$ and
  \[ \cK(\bH) = \{ \bldx\in\rr^\cI\,:\, \bldx|_{\cI'}\in\cK(\bH'),\
  \bldx|_{\cI\setminus\cI'}=\bldzero\}\;. \] Consequently, $\w^{\min}(\bH) =
  \w^{\min}(\bH')$, and this proves $\rho(\cC)\le \rho(\cC')+t$.
\end{proof}

\begin{lemma}\label{lemma:distance_two}
  Let $\cC$ be a code of minimum distance $d\le 2$.  Then
  $d=\w^{\min}(\bH)$ for any parity-check matrix $\bH$ of $\cC$,
  i.e.\ $\cC$ is in class $3$ (for AWGNC and
  BSC pseudoweight, and for max-fractional weight).
\end{lemma}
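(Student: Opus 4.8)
The plan is to reduce the statement to the max-fractional weight and then establish a matching lower bound directly from the fundamental-cone inequalities. Recall from the remark preceding this lemma that all four minimum pseudoweights are upper bounded by~$d$, so it suffices to prove the reverse inequality for the max-fractional weight. By Lemma~\ref{lemma:relations} we have $\weight_{\maxfrac}^{\min}(\cH)\le\weight_{\AWGNC}^{\min}(\cH)$ and $\weight_{\maxfrac}^{\min}(\cH)\le\weight_{\BSC}^{\min}(\cH)$ for an arbitrary parity-check matrix $\cH$ of $\cC$. Hence, once I show $\weight_{\maxfrac}^{\min}(\cH)\ge d$, the sandwiching with the upper bound forces $\weight_{\AWGNC}^{\min}(\cH)=\weight_{\BSC}^{\min}(\cH)=\weight_{\maxfrac}^{\min}(\cH)=d$, which is precisely the class-$3$ property.

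The first observation I would record is a trivial bound valid for any nonzero $\bldx\in\cK(\cH)$: by nonnegativity~(\ref{eq:polytope-inequality-2}) we have $\sum_{i\in\cI}x_i\ge\max_{i\in\cI}x_i>0$, so $\weight_{\maxfrac}(\bldx)\ge 1$. This already disposes of the case $d=1$, where the upper bound gives $\weight_{\maxfrac}^{\min}(\cH)\le 1$ as well.

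For $d=2$ I would sharpen this to $\weight_{\maxfrac}(\bldx)\ge 2$. First note that since $d=2>1$, no column of $\cH$ can be identically zero; a zero column in position $\ell$ would place the weight-one unit vector $\blde_\ell$ in $\ker\cH=\cC$, contradicting $d=2$. Consequently every coordinate index belongs to at least one check. Now fix a nonzero pseudocodeword $\bldx$ and pick an index $\ell\in\cI$ attaining $x_\ell=\max_{i\in\cI}x_i>0$, together with a check $j\in\cJ$ with $\ell\in\cI_j$. Applying~(\ref{eq:polytope-inequality-1}) to this pair yields $x_\ell\le\sum_{i\in\cI_j\setminus\{\ell\}}x_i$. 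Since all coordinates are nonnegative and $\cI_j\setminus\{\ell\}\subseteq\cI\setminus\{\ell\}$, it follows that $\sum_{i\in\cI}x_i=x_\ell+\sum_{i\ne\ell}x_i\ge x_\ell+\sum_{i\in\cI_j\setminus\{\ell\}}x_i\ge 2x_\ell$, whence $\weight_{\maxfrac}(\bldx)=(\sum_{i\in\cI}x_i)/x_\ell\ge 2$. Taking the minimum over nonzero $\bldx$ gives $\weight_{\maxfrac}^{\min}(\cH)\ge 2=d$, completing this case.

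The argument is short, and the only genuine subtlety is the observation that $d=2$ forbids zero columns: this is exactly what guarantees a check containing the maximizing coordinate $\ell$, so that the fundamental-cone inequality~(\ref{eq:polytope-inequality-1}) can be invoked. Without such a check the maximizing coordinate would be unconstrained from above and the bound would fail. Everything else is routine manipulation of nonnegative sums, so I do not anticipate any real obstacle beyond identifying this one structural point.
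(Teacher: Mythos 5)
Your proof is correct and follows essentially the same route as the paper's own argument: reduce to the max-fractional weight via Lemma~\ref{lemma:relations}, dispose of $d=1$ by the trivial bound $\weight_{\maxfrac}(\bldx)\ge 1$, and for $d=2$ use the absence of zero columns to apply inequality~(\ref{eq:polytope-inequality-1}) at the maximizing coordinate, yielding $2x_\ell\le\sum_{i\in\cI}x_i$. Your explicit justification that a zero column would put a weight-one unit vector in $\ker\cH$ is a detail the paper leaves implicit, but the substance is identical.
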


\begin{proof}
  By Lemma~\ref{lemma:relations} it suffices to prove this lemma for
  the max-fractional weight $\w=\w_{\maxfrac}$.  Since $\w(\bldx)\ge
  1$ holds for all nonzero pseudocodewords, we always have
  $\w^{\min}(\bH)\ge 1$, which proves the result in the case $d=1$.

  Let $d=2$ and $\bH$ be a parity-check matrix for $\cC$.  Let
  $\bldx\in\cK(\bH)$ and let $x_{\ell}$ be the largest coordinate.
  Since $d=2$ there is no zero column in $\bH$ and thus there exists a
  row $j$ with $\ell\in\cI_j$.  Then $x_{\ell}\le
  \sum_{i\in\cI\setminus\{\ell\}}x_i$, hence
  $2x_{\ell}\le\sum_{i\in\cI}x_i$, and thus $\w(\bldx)\ge 2$.  It
  follows $\w^{\min}(\bH)\ge 2$ and the lemma is proved.
\end{proof}


\section{Parity-check matrices with rows of
  weight~$2$}\label{sec:rows_weight_2}

The main result of this section appears in the following lemma. 

\begin{lemma}\label{lemma:row_weight_2}
  Let $\cH$ be a parity-check matrix of $\code$ such that every row in
  $\cH$ has weight~$2$.  Then:
  \begin{enumerate}
  \item[(a)] There is an equivalence relation on the set $\cI$ of column
    indices of $\cH$ such that for a vector $\bldx\in\rr^n$ with
    non-negative coordinates we have $\bldx\in\cK(\cH)$ if and only if
    $\bldx$ has equal coordinates within each equivalence class.
  \item[(b)] The minimum distance of $\code$ is equal to its minimum AWGNC
    and BSC pseudoweights and its max-fractional weight with respect
    to $\cH$, i.e.\ $d(\code) = \weight^{\min}(\cH)$.
  \end{enumerate}
\end{lemma}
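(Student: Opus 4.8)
The plan is to first extract the combinatorial structure that weight-$2$ rows force on the fundamental cone, and then to read off both parts from it.

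For part~(a), I would examine a single row $j$ with support $\cI_j=\{i,\ell\}$. Applying inequality~(\ref{eq:polytope-inequality-1}) to this row, once with distinguished index $\ell$ and once with $i$, yields $x_\ell\le x_i$ and $x_i\le x_\ell$, hence $x_i=x_\ell$. Thus each weight-$2$ row simply identifies its two coordinates. I would then define a relation on $\cI$ by declaring $i$ and $\ell$ adjacent whenever $\{i,\ell\}$ is the support of some row, and take the equivalence relation generated by it (equivalently, the connected components of the associated graph on vertex set $\cI$). For a non-negative $\bldx$, constraint~(\ref{eq:polytope-inequality-2}) holds automatically, and the remaining constraints are exactly the equalities $x_i=x_\ell$ for adjacent $i,\ell$; chaining these along paths shows $\bldx\in\cK(\cH)$ if and only if $\bldx$ is constant on each equivalence class, which is the assertion.

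For part~(b), let the equivalence classes be $B_1,\dots,B_s$ with sizes $b_k=|B_k|$. Over $\ff_2$ the same row constraints read $c_i=c_\ell$, so $\code=\ker\cH$ consists exactly of the vectors constant on each class; a nonzero codeword is the indicator of a nonempty union of classes, and since weights add, the minimum distance is $d=\min_k b_k$, attained by the indicator of a smallest class. By part~(a), any nonzero pseudocodeword $\bldx\in\cK(\cH)$ is described by non-negative values $a_1,\dots,a_s$ (the common value on each $B_k$), not all zero, with $\weight_{\maxfrac}(\bldx)=\bigl(\sum_k b_k a_k\bigr)/\max_k a_k$.

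It remains to estimate this quantity. Letting $k^\ast$ index a class on which the maximum value is attained, non-negativity of all terms gives $\sum_k b_k a_k\ge b_{k^\ast}a_{k^\ast}$, whence $\weight_{\maxfrac}(\bldx)\ge b_{k^\ast}\ge\min_k b_k=d$, so $\weight_{\maxfrac}^{\min}(\cH)\ge d$. Since the excerpt records that every minimum pseudoweight is at most~$d$, and Lemma~\ref{lemma:relations} places the max-fractional minimum below the AWGNC and BSC minima, we get $d\le\weight_{\maxfrac}^{\min}(\cH)\le\weight_{\AWGNC}^{\min}(\cH)\le d$ and likewise for the BSC, forcing equality throughout and giving $d(\code)=\weight^{\min}(\cH)$ for all three measures. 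The only substantive work is the reduction in part~(a) that collapses each weight-$2$ constraint to an equality and propagates it across a class; once the cone is identified with the functions constant on classes, the max-fractional estimate and the sandwich from Lemma~\ref{lemma:relations} are immediate, so I expect no serious obstacle beyond setting up the equivalence relation correctly.
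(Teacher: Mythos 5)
Your proof is correct and follows essentially the same route as the paper's: the same equivalence relation (connected components of the graph whose edges are the weight-$2$ row supports), the same identification of the code and the cone with vectors constant on classes, the same max-fractional lower bound via the class containing the largest coordinate, and the same appeal to Lemma~\ref{lemma:relations} to transfer the bound to the AWGNC and BSC pseudoweights. The only difference is cosmetic: you spell out the step the paper leaves as ``easy to check,'' namely that each weight-$2$ row forces equality of its two coordinates.
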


\begin{proof} 
  For (a), define the required relation $R$ as follows: For
  $i,i'\in\cI$ let $(i,i')\in R$ if and only if $i=i'$ or there exists
  an integer $\ell \ge 1$, column indices $i=i_0, i_1, \dots,
  i_{\ell-1}, i_{\ell}=i'\in\cI$ and row indices $j_1,\dots,j_l\in\cJ$
  such that
    \[ \{ i_0, i_1 \} = \cI_{j_1} \, , \,
    \{ i_1, i_2 \} = \cI_{j_2} \, , \, \dots \, , \,
    \{ i_{\ell-1}, i_{\ell} \} = \cI_{j_{\ell}} \; . \]
  This is an equivalence relation, and it defines equivalence
  classes over $\cI$.  It is easy to check that
  inequalities~(\ref{eq:polytope-inequality-1}) imply that
  $\bldx\in\cK(\cH)$ if and only if $x_i=x_{i'}$ for any $(i,i')\in
  R$.

  In order to prove (b), we note that the minimum (AWGNC, BSC or
  max-fractional) pseudoweight is always bounded above by the minimum
  distance of $\code$, so we only have to show that the minimum
  pseudoweight is bounded below by the minimum distance.
    
  Let $\cS=\{S_1, S_2, \dots, S_t\}$ be the set of equivalence classes
  of $R$, and let $d_S = |S|$ for $S\in\cS$.  It is easy to see that
  the minimum distance of $\code$ is $d = \min_{S\in\cS}d_S$ (since
  the minimum weight nonzero codeword of $\code$ has non-zeros in the
  coordinates corresponding to a set $S\in\cS$ of minimal size and
  zeros everywhere else).

  Now let $\bldx\in\cK(\cH)$.  Since the coordinates $x_i$, $i\in\cI$,
  depend only on the equivalence classes, we may use the notation
  $x_S$, $S\in\cS$.  Let $x_T$ be the largest coordinate. Then:
  \[ \weight_{\maxfrac}(\bldx) = \frac{\sum_{i \in \cI} x_i}{x_T} 
  \ge \frac{\sum_{i\in T}x_i}{x_T} = |T|=d_T \ge d \;. \]
  Therefore, $\weight_{\maxfrac}^{\min} (\cH) \ge d$, and by using
  Lemma~\ref{lemma:relations}, we obtain that $\weight_{\AWGNC}^{\min}
  (\cH) \ge d$ and $\weight_{\BSC}^{\min} (\cH) \ge d$.
\end{proof}\medskip

The following proposition is a stronger version of
Lemma~\ref{lemma:row_weight_2}.

\begin{proposition}\label{prop:row_weight_2}
  Let $\cH$ be an $m \times n$ parity-check matrix of $\code$, and
  assume that $m-1$ first rows in $\cH$ have weight 2.  Denote by
  $\widehat{\cH}$ the $(m-1) \times n$ matrix consisting of these
  rows, consider the equivalence relation of
  Lemma~\ref{lemma:row_weight_2} (a) with respect to $\widehat{\cH}$, and
  assume that $\cI_m$ intersects each equivalence class in at most one
  element.  Then, the minimum distance of $\code$ is equal to its
  minimum AWGNC and BSC pseudoweights and its max-fractional weight
  with respect to $\cH$, i.e.\ $d(\code) = \weight^{\min}(\cH)$.
\end{proposition}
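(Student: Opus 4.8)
The plan is to mimic the proof of Lemma~\ref{lemma:row_weight_2}, reducing everything to the max-fractional weight and extracting the needed extra leverage from the single remaining row through its fundamental-cone inequality. First I would observe that adjoining a row only shrinks the cone, so $\cK(\cH)\subseteq\cK(\widehat{\cH})$; hence by Lemma~\ref{lemma:row_weight_2}(a) every $\bldx\in\cK(\cH)$ is constant on the equivalence classes of $\widehat{\cH}$, and I may write $x_S$ for its value on a class $S\in\cS$ and $d_S=|S|$. As before, the minimum pseudoweight is at most $d$, and by Lemma~\ref{lemma:relations} it suffices to show $\weight_{\maxfrac}^{\min}(\cH)\ge d$. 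Next I would describe the codewords of $\code$: they are exactly the class-constant vectors that select an even number of classes from the family $\cS_m\define\{S\in\cS : S\cap\cI_m\ne\varnothing\}$. This uses the hypothesis that $\cI_m$ meets each class in at most one element, so that the row-$m$ check reads $\sum_{S\in\cS_m}c_S=0$ over $\ff_2$ (and note $|\cI_m|=|\cS_m|$). Consequently $d$ is the minimum of $d_S$ over $S\notin\cS_m$ and of $d_S+d_{S'}$ over pairs $\{S,S'\}\subseteq\cS_m$.

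Now let $\bldx\in\cK(\cH)$ be nonzero and let $T$ attain $x_T=\max_S x_S$. If $T$ can be chosen outside $\cS_m$, the argument of Lemma~\ref{lemma:row_weight_2} applies verbatim: the indicator of $T$ is a codeword of weight $d_T\ge d$, and $\weight_{\maxfrac}(\bldx)=\bigl(\sum_S d_S x_S\bigr)/x_T\ge d_T\ge d$.

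The remaining case, which I expect to be the main obstacle, is when every maximizing class lies in $\cS_m$; here the single-class codeword is unavailable, since selecting one class of $\cS_m$ violates the parity check, and this is precisely where the extra row must be used. Fixing $T\in\cS_m$ with $x_T$ maximal and a representative $\ell\in T\cap\cI_m$, inequality~(\ref{eq:polytope-inequality-1}) for row $m$ yields $x_T=x_\ell\le\sum_{i\in\cI_m\setminus\{\ell\}}x_i=\sum_{S\in\cS_m\setminus\{T\}}x_S$, using again that distinct elements of $\cI_m$ fall in distinct classes. I would then bound $\sum_S d_S x_S\ge\sum_{S\in\cS_m}d_S x_S$ and minimize the right-hand side over the nonnegative values $x_S$ subject to this single inequality; since all $d_S\ge 1$, the minimum is attained by placing the whole available mass $\sum_{S\in\cS_m\setminus\{T\}}x_S\ge x_T$ on a lightest class $S_*\in\cS_m\setminus\{T\}$, giving $\sum_{S\in\cS_m}d_S x_S\ge(d_T+d_{S_*})x_T$. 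As $\{T,S_*\}$ is a legitimate two-class codeword, $d_T+d_{S_*}\ge d$, so $\weight_{\maxfrac}(\bldx)\ge d$ once more.

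The only extra point to check is the degenerate situation $|\cS_m|=1$ (a weight-one row $m$): then the cone inequality for $\ell$ forces $x_\ell=0$, hence $x_T=0$, so no positive maximizer can sit in $\cS_m$ and we fall back into the first case. Having established $\weight_{\maxfrac}^{\min}(\cH)\ge d$ in all cases, Lemma~\ref{lemma:relations} upgrades the bound to the AWGNC and BSC pseudoweights, and combining with the trivial upper bound $\weight^{\min}(\cH)\le d$ completes the proof.
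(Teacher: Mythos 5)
Your proof is correct, and it shares the paper's skeleton---reduction to the max-fractional weight via Lemma~\ref{lemma:relations}, class-constant coordinates $x_S$, the row-$m$ cone inequality rewritten over classes, and the lightest-class estimate---but the logical route differs in a useful way. The paper fixes a pseudocodeword that \emph{attains} the minimum max-fractional weight and constrains its structure by perturbation (zeroing out coordinates of classes in $\cS''$, then of classes in $\cS'$, each time contradicting minimality), which leads to the exact formula $\weight_{\maxfrac}^{\min}(\cH)=\min\bigl\{\min_{S,T\in\cS',\,S\neq T}\{d_S+d_T\}\,,\,\min_{S\in\cS''}\{d_S\}\bigr\}$ and then identifies this quantity with $d$. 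You instead bound an \emph{arbitrary} nonzero pseudocodeword from below, splitting on whether a maximizing class can be chosen outside your $\cS_m$ (the paper's $\cS'$): if yes, the single-class codeword gives $\weight_{\maxfrac}(\bldx)\ge d_T\ge d$; if not, the row-$m$ inequality plus the lightest-class bound gives $\weight_{\maxfrac}(\bldx)\ge d_T+d_{S_*}\ge d$, where $\{T,S_*\}$ supports a legitimate two-class codeword. Your version buys two things: it never needs the (tacit, though harmless, since the weight is scale-invariant and the cone polyhedral) assumption that the minimum is attained, and it dispenses with the perturbation bookkeeping, including a clean dispatch of the degenerate weight-one row $m$, which the paper handles only implicitly. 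What it gives up is the sharper conclusion: the paper's argument computes $\weight_{\maxfrac}^{\min}(\cH)$ exactly, while yours certifies only the inequality $\weight_{\maxfrac}^{\min}(\cH)\ge d$---which is, however, all the proposition requires.
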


\begin{proof} 
  Let $\cS$ be the set of classes of the aforementioned equivalence
  relation on $\cI$, and let $d_S=|S|$ for $S\in\cS$.  Let
  \[ \cS' = \{ S \in \cS \; : \; |S \cap \cI_m| = 1 \} \;.  \] Also
  let $\cS'' = \cS \backslash \cS'$, so that $S \cap \cI_m =
  \varnothing$ for all $S \in \cS''$.

  Let $\bldx\in\cK(\cH) \backslash \{ \zeros \}$. As before, since the
  coordinates $x_i$, $i\in\cI$, depend only on the equivalence
  classes, we may use the notation $x_S$, $S\in\cS$.  The fundamental
  polytope constraints \eqref{eq:polytope-inequality-1} and
  \eqref{eq:polytope-inequality-2} may then be written as $x_S \ge 0$
  for all $S \in \cS$ and
  \begin{equation}
    \forall R \in \cS' \; : \; x_R \le \sum_{S \in \cS'\setminus\{R\}} x_S \; , 
    \label{eq:polytope_ineq_S'}
  \end{equation}
  respectively, and the max-fractional pseudoweight of $\bldx \in \cK(\cH) \backslash \{ \zeros \}$ is given by 
  \begin{equation}
    \weight_{\maxfrac}(\bldx) 
    = \frac{\sum_{S \in \cS} d_S x_S}{\max_{S \in \cS} x_S} \; .
    \label{eq:new_maxfrac}
\end{equation}

Suppose $\bldx\in\cK(\cH) \backslash \{ \zeros \}$ has minimal
max-fractional pseudoweight.  Let $x_T$ be its largest
coordinate. First note that if there exists $R \in
\cS''\setminus\{T\}$ with $x_R > 0$, setting $x_R$ to zero results in
a new pseudocodeword with lower max-fractional pseudoweight, which
contradicts the assumption that $\bldx$ achieves the
minimum. Therefore $x_R = 0$ for all $R \in \cS''\setminus\{T\}$. We
next consider two cases.

\emph{Case 1:} $T \in \cS''$. If there exists $R \in \cS'$ with $x_R >
0$, setting all such $x_R$ to zero results in a new pseudocodeword
with lower max-fractional pseudoweight, which contradicts the
minimality of the max-fractional pseudoweight of $\bldx$. Therefore
$x_T$ is the only positive coordinate of $\bldx$, and by
\eqref{eq:new_maxfrac} the max-fractional pseudoweight of $\bldx$ is
$d_T$.

\emph{Case 2:} $T \in \cS'$. In this case $x_R = 0$ for all $R \in
\cS''$. From inequality~(\ref{eq:polytope_ineq_S'}) for $R=T$ we
obtain
\[ x_T \le \sum_{S\in\cS'\setminus\{T\}} x_S \;. \]
With $d_0 \define \min_{S\in\cS'\setminus\{T\}}d_S$ it follows that
\[ d_0 x_T \le \sum_{S\in\cS'\setminus\{T\}} d_0 x_S \le
\sum_{S\in\cS'\setminus\{T\}} d_Sx_S \;. \] Consequently, \[
(d_T+d_0)x_T \le \sum_{S\in\cS} d_Sx_S \;, \] and thus
$\weight_{\maxfrac}(\bldx) \ge d_T+d_0$.  We conclude that the minimum
max-fractional pseudoweight is given by
\[ \weight_{\maxfrac}^{\min} (\cH) = \min \left\{ \min_{S, T \in \cS',
    S \neq T} \{d_S + d_T\} \; , \; \min_{S \in \cS''} \{d_S \}
\right\} \;. \] But this is easily seen to be equal to the minimum
distance $d$ of the code.

Finally, by using Lemma~\ref{lemma:relations}, we obtain that
$\weight_{\AWGNC}^{\min} (\cH) = d$ and $\weight_{\BSC}^{\min} (\cH) =
d$.
\end{proof}

\emph{Remark:} Note that the requirement that all $i \in \cI_m$ belong
to the different equivalence classes of $\widehat{\cH}$ in
Proposition~\ref{prop:row_weight_2} is necessary. Indeed, consider the
matrix
\[ \cH = \mat{
  \ze & \ze & \zo & \zo \\
  \zo & \ze & \ze & \zo \\
  \ze & \zo & \ze & \zo \\
  \ze & \ze & \ze & \ze \\
} \;. \]
One can see that there are two equivalence classes for
$\widehat{\cH}$: $S_1 = \{ 1, 2, 3 \}$, $S_2 = \{ 4 \}$. The minimum
distance of the corresponding code $\code$ is $4$ (since $(1, 1, 1, 1)$
is the only nonzero codeword).  However, $\bldx = (1, 1, 1, 3) \in
\cK(\cH)$ is a pseudocodeword of max-fractional weight $2$.

\begin{corollary}\label{corollary:dimension_two}
  Let $\code$ be a code of length~$n$ and dimension~$2$.  Then
  $\rho(\code)=n-2$, i.e.\ $\code$ is of class at least~$2$ (for AWGNC
  and BSC pseudoweight, and for max-fractional weight).
\end{corollary}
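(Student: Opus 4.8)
The plan is to prove the two inequalities $\rho(\code)\ge n-2$ and $\rho(\code)\le n-2$ separately. The lower bound is immediate from the general remark $\rho(\code)\ge r=n-k=n-2$, so the entire content lies in the upper bound: I must exhibit a parity-check matrix for $\code$ with exactly $n-2$ rows (hence of full rank $n-2$) whose minimum pseudoweight equals $d$. A first simplification is to discard zero coordinates. If $\code$ has $t$ zero coordinates and $\code'$ is the code punctured at them, then $\code'$ again has dimension $2$ and length $n-t$; by Lemma~\ref{lemma:puncturing} we get $\rho(\code)\le\rho(\code')+t$, so it suffices to establish $\rho(\code')=(n-t)-2$. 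Hence I may assume from the outset that $\code$ has no zero coordinate.

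Next I would set up the combinatorial structure of a dimension-$2$ code. Fixing two generating codewords $\bldc_1,\bldc_2$, each coordinate $i\in\cI$ is classified by the pair $\big((\bldc_1)_i,(\bldc_2)_i\big)\in\ff_2^2$; since there is no zero coordinate this yields a partition $\cS=\{S_a,S_b,S_c\}$ into the classes of type $(1,0)$, $(0,1)$, $(1,1)$ (some possibly empty, but at least two nonempty, else $\dim\code<2$). With $d_a,d_b,d_c$ the class sizes, the three nonzero codewords have weights $d_a+d_c$, $d_b+d_c$, $d_a+d_b$, so $d$ is the sum of the two smallest of $d_a,d_b,d_c$. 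The key observation is that two coordinates lie in the same class exactly when every codeword agrees on them, so a weight-$2$ row supported on two coordinates of a common class is orthogonal to $\code$.

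The construction then proceeds as follows. Within each class I take a spanning path, producing $d_a-1$, $d_b-1$, $d_c-1$ valid weight-$2$ rows whose induced equivalence relation (Lemma~\ref{lemma:row_weight_2}(a)) has precisely $S_a,S_b,S_c$ as its classes. If only two classes are nonempty, these $n-2$ weight-$2$ rows already form a full-rank parity-check matrix, and Lemma~\ref{lemma:row_weight_2}(b) gives $\w^{\min}=d$ directly. If all three are nonempty, I append one further row $\blde_{i_a}+\blde_{i_b}+\blde_{i_c}$, with one coordinate chosen in each class; I would check that it is orthogonal to $\code$ and independent of the previous rows (raising the rank to $n-2$), while its support meets each class exactly once. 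This is exactly the hypothesis of Proposition~\ref{prop:row_weight_2}, which yields $\w^{\min}(\bldH)=d$. In either case $\bldH$ has $n-2$ rows, so $\rho(\code)\le n-2$, and equality follows.

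I expect the main obstacle to be the rank bookkeeping: I must confirm that the weight-$2$ rows contribute rank exactly $\sum_S(d_S-1)$, equal to $n-3$ in the three-class case and $n-2$ in the two-class case, and that the appended linking row is genuinely orthogonal to $\code$ yet outside the row space of the weight-$2$ rows, so that the final matrix has full rank $n-2$ and meets the lower bound with equality. Verifying simultaneously that this appended row is orthogonal to both generators and satisfies the ``at most one element per class'' condition of Proposition~\ref{prop:row_weight_2} is the delicate point; everything else reduces to the already-established lemmas.
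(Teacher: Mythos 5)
Your proposal is correct and follows essentially the same route as the paper's proof: reduce to codes without zero coordinates via Lemma~\ref{lemma:puncturing}, partition the coordinates into the three classes determined by two independent codewords, take spanning-path weight-$2$ rows within each class plus one linking row meeting each nonempty class in exactly one position, and conclude with Proposition~\ref{prop:row_weight_2} (this is precisely the paper's $H_1, H_2, H_3, H_4$ construction). Your explicit split into the two-nonempty-class case (settled by Lemma~\ref{lemma:row_weight_2} alone) and the three-class case is in fact slightly more careful than the paper, whose single construction tacitly assumes all three classes are nonempty.
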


\begin{proof}
  We consider two cases. 
  \begin{itemize}
  \item{\it Case 1: $\code$ has no zero coordinates.}
  
    Let $\bldc_1$ and $\bldc_2$ be two linearly independent codewords
    of $\code$.  Define the following subsets of $\cI$:
    \begin{eqnarray*}
      S_1 & \define &  \{ i \in \cI \; : \; 
      i \in \mbox{supp}(\bldc_1) 
      \mbox{ and } i \notin \mbox{supp}(\bldc_2) \} \; \\
      S_2 & \define & 
      \{ i \in \cI \; : \; 
      i \notin \mbox{supp}(\bldc_1) 
      \mbox{ and } i \in \mbox{supp}(\bldc_2) \} \; \\
      S_3 & \define & \{ i \in \cI \; : \; 
      i \in \mbox{supp}(\bldc_1) 
      \mbox{ and } i \in \mbox{supp}(\bldc_2) \} .
    \end{eqnarray*}
    The sets $S_1$, $S_2$ and $S_3$ are pairwise disjoint.  Since
    $\code$ has no zero coordinates, $\cI = S_1 \cup S_2 \cup S_3$.
    The ordering of elements in $\cI$ implies an ordering on the
    elements in each of $S_1$, $S_2$ and $S_3$.  Assume that $S_1 = \{
    i_1, i_2, \cdots, i_{|S_1|}\}$ and $i_1 < i_2 < \cdots <
    i_{|S_1|}$.  If $S_1 \neq \varnothing$, let $m_1 = i_1$ be the
    minimal element in $S_1$, and define an $(|S_1| - 1) \times n$
    matrix $H_1$ as follows:
    \[ (H_1)_{j,\ell} = \left\{ \begin{array}{cl}
        1 & \mbox{ if } i_j = \ell \mbox{ or } i_{j+1} = \ell \; , \\
        & \qquad \qquad j = 1, 2, \cdots, |S_1|-1 \; , \\
        0 & \mbox{ otherwise } \; .  
      \end{array} \right.  \]
    Similarly, define $(|S_2| - 1) \times n$ and $(|S_3| - 1) \times n$
    matrices $H_2$ and $H_3$, with respect to $S_2$ and $S_3$. Let $m_2$
    and $m_3$ be minimal elements of $S_2$ and $S_3$, respectively.
  
    Define also a $1 \times n$ matrix $H_4$: 
    \[ (H_4)_{1,\ell} = \left\{ \begin{array}{cl}
        1 & \mbox{ if } S_j \neq \varnothing \mbox{ and } m_j = \ell \\
        & \qquad \qquad \mbox{ for } j = 1, 2, 3 \; , \\
        0 & \mbox{ otherwise } \; .  
      \end{array} \right.  \]
  
    Finally, define an $(n-2) \times n$ matrix $\cH$ by $\cH^T \define
    [ H_1^T \; | \; H_2^T \; | \;H_3^T \; | \; H_4^T ]$.  (Some of the
    $S_i$'s might be equal to $\varnothing$, in which case the
    corresponding $H_i$ is an $0 \times n$ ``empty'' matrix.)  It is
    easy to see that all rows of $\cH$ are linearly independent, and
    so it is of rank $n-2$.  It is also straightforward that for all
    $\bldc \in \code$ we have $\bldc \in \ker(\cH)$. Therefore, $\cH$
    is a parity-check matrix of~$\code$.
  
    The matrix $\cH$ has a form as in
    Proposition~\ref{prop:row_weight_2} (where $S_1$, $S_2$ and $S_3$
    are corresponding equivalence classes over $\cI$), and therefore
    $\rho(\code) = n-2$.\medskip

  \item{\it Case 2: $\code$ has $t>0$ zero coordinates.}
    
    Consider a code $\code'$ of length $n-t$ obtained by puncturing
    $\code$ in these $t$ zero coordinates. From Case 1 (with respect
    to $\code'$), $\rho(C') = n-t-2$. By applying the rightmost
    inequality in Lemma~\ref{lemma:puncturing}, we have $\rho(C) \le
    n-2$. Since $k=2$, we conclude that $\rho(C) = n-2$.
  \end{itemize}
\end{proof}


\section{The Pseudocodeword Redundancy for Codes of Small
  Length}\label{sec:short_codes}

In this section we compute the AWGNC, BSC, and max-fractional
pseudocodeword redundancies for all codes of small length.  By
Lemma~\ref{lemma:distance_two} it is sufficient to examine only codes
with minimum distance at least~$3$.  Furthermore, in light of
Lemma~\ref{lemma:puncturing} we will consider only codes without
zero coordinates, i.e.\ that have a dual distance of at least~$2$.
Finally, we point out to Corollary~\ref{corollary:dimension_two} for
codes of dimension~$2$, by which we may focus on codes with dimension
at least~$3$.

\subsection{The Algorithm}

To compute the pseudocodeword redundancy of a code $\cC$ we have to
examine all possible parity-check matrices for the code $\cC$, up to
equivalence.  Here, we say that two parity-check matrices $\bH$ and $\bH'$ for
the code $\cC$ are \emph{equivalent} if $\bH$ can be transformed into
$\bH'$ by a sequence of row and column permutations.  In this case,
$\w^{\min}(\bH) = \w^{\min}(\bH')$ holds for the AWGNC and BSC
pseudoweights as well as for the max-fractional weight.
The enumeration of codes and parity-check matrices can be described by
the following algorithm.

\hrulefill\medskip

\textbf{Input:} Parameters $n$ (code length), $k$ (code dimension),
$\rho$ (number of rows of the output parity-check matrices), where
$\rho\ge r\define n-k$.

\textbf{Output:} For all codes of length $n$, dimension $k$,
distance $d\ge 3$, and without zero coordinates, up to code
equivalence: a list of all $\rho\times n$ parity-check matrices, up
to parity-check matrix equivalence.\medskip

\begin{enumerate}
\item Collect the set $X$ of all $r\times n$ matrices such that
  \begin{itemize}
  \item they have different nonzero columns, ordered
    lexicographically,
  \item there is no non-empty $\ff_2$-sum of rows which has weight~$0$
    or~$1$ {\sl (this way, the matrices are of full rank and the
      minimum distance of the row space is at least~$2$)}.
  \end{itemize}
\item Determine the orbits in $X$ under the action of the group
  $\GL_r(2)$ of invertible $r\times r$ matrices over $\ff_2$ {\sl
    (this enumerates all codes with the required properties, up to
    equivalence; the codes are represented by parity-check matrices)}.
\item For each orbit $X_{\cC}$, representing a code $\cC$:
  \begin{enumerate}
  \item Determine the suborbits in $X_{\cC}$ under the action of the
    symmetric group $S_r$ {\sl (this enumerates all parity-check
      matrices without redundant rows, up to equivalence)}.
    
  \item For each representative $\bH$ of the suborbits, collect all
    matrices enlarged by adding $\rho-r$ different redundant rows that
    are $\ff_2$-sums of at least two rows of $\bH$.  Let
    $X_{\cC,\rho}$ be the union of all such $\rho\times n$ matrices.

  \item Determine the orbits in $X_{\cC,\rho}$ under the action of the
    symmetric group $S_{\rho}$, and output a representative for each
    orbit.
  \end{enumerate}
\end{enumerate}

\hrulefill\medskip

This algorithm was implemented in the C programming language.  The minimum pseudoweights for
the various parity-check matrices were computed by using Maple 12 and
the Convex package~\cite{maple-convex}.

\subsection{Results}

We considered all binary linear codes up to length $n$ with distance
$d\ge 3$ and without zero coordinates, up to code equivalence.  The
number of those codes for given length $n$ and dimension $k$ is shown
in Table~\ref{table:no-codes}.

\begin{table}
  \caption{The Number of Binary $[n,k,d]$ Codes
    with~$d\ge 3$~and~without~Zero~Coordinates}
  \label{table:no-codes}
  {\centering
    \begin{tabular}{r|rcccc}
      & $k=1$ & $2$ & $3$ & $4$ & $5$ \\\hline
      & & & & & \vspace{-2mm} \\
      $n=5$ & $1$ & $1$ & & & \\
      $6$ & $1$ & $3$ & $1$ & & \\
      $7$ & $1$ & $4$ & $4$ & $1$ & \\
      $8$ & $1$ & $6$ & $10$ & $5$ & \\
      $9$ & $1$ & $8$ & $23$ & $23$ & $5$
    \end{tabular}\\}%
\end{table}

\subsubsection{AWGNC pseudoweight}

The following results were found to hold for all codes of length $n\le
9$.

\begin{itemize}\setlength{\parsep}{0pt}\setlength{\itemsep}{4pt}
\item There are only two codes $\cC$ with $\rho_{\AWGNC}(\cC)>r$,
  i.e. in class $0$ or $1$ for the AWGNC.
  \begin{itemize}
  \item The $[8,4,4]$ extended Hamming code is the shortest code
    $\cC$ in class 1.  We have $\rho_{\AWGNC}(\cC)=5>4=r$ and out
    of $12$ possible parity-check matrices (up to equivalence) with one
    redundant row there is exactly one matrix $\bH$ with
    $\w^{\min}_{\AWGNC}(\bH)=4$, namely
    \[ \bH = \mat{
      \ze & \zo & \zo & \ze & \ze & \zo & \zo & \ze \\
      \zo & \ze & \zo & \ze & \zo & \ze & \zo & \ze \\
      \zo & \zo & \ze & \ze & \zo & \zo & \ze & \ze \\
      \ze & \ze & \ze & \ze & \zo & \zo & \zo & \zo \\
      \zo & \zo & \zo & \zo & \ze & \ze & \ze & \ze \\
    }\;. \] There is exactly one matrix $\bH$ with
    $\w^{\min}_{\AWGNC}(\bH)=25/7$, and for the remaining matrices $\bH$ we
    have $\w^{\min}_{\AWGNC}(\bH)=3$.
  \item Out of the four $[9,4,4]$ codes there is one code $\cC$ in
    class 1.  We have $\rho_{\AWGNC}(\cC)=6>5=r$ and out of $2526$
    possible parity-check matrices (up to equivalence) with one
    redundant row there are $13$ matrices $\bH$ with
    $\w^{\min}_{\AWGNC}(\bH)=4$.
  \end{itemize}
\item For all codes $\cC$ of minimum distance $d\ge 3$ and for all
  parity-check matrices $\bH$ of $\cC$ we have
  $\w^{\min}_{\AWGNC}(\bH)\ge 3$; in particular, if $d=3$, then $\cC$
  is in class~$3$ for the AWGNC.
\item For the $[7,3,4]$ simplex code there is (up to equivalence) only
  one parity-check matrix $\bH$ without redundant rows such that
  $\w_{\AWGNC}^{\min}(\bH)=4$, namely
  \[ \bH = \mat{
    \ze & \ze & \zo & \ze & \zo & \zo & \zo \\
    \zo & \ze & \ze & \zo & \ze & \zo & \zo \\
    \zo & \zo & \ze & \ze & \zo & \ze & \zo \\
    \zo & \zo & \zo & \ze & \ze & \zo & \ze \\
  }\;. \] It is the only parity-check matrix with constant row weight
  $3$.
\end{itemize}

\subsubsection{BSC pseudoweight}

We computed the pseudocodeword redundancy for the BSC for all codes of
length $n\le 8$.

\begin{itemize}\setlength{\parsep}{0pt}\setlength{\itemsep}{4pt}
\item The shortest codes with $\rho_{\BSC}(\cC)>r$, i.e. in class~$0$
  or $1$ for the BSC, are the $[7,4,3]$ Hamming code $\cC$ and its
  dual code $\cC^{\bot}$, the $[7,3,4]$ simplex code.  We have
  $\rho_{\BSC}(\cC)=4>3$ and $\rho_{\BSC}(\cC^{\bot})=5>4$.
\item There are two codes of length~$8$ with
  $\rho_{\BSC}(\cC)>r$.  These are the $[8,4,4]$ extended Hamming
  code, for which $\rho_{\BSC}(\cC)=6>4$ holds, and one of the three
  $[8,3,4]$ codes, which satisfies $\rho_{\BSC}(\cC)=6>5$.
\end{itemize}

\subsubsection{Max-fractional weight}

We computed the pseudocodeword redundancy with respect to the
max-fractional weight for all codes of length $n\le 8$.

\begin{itemize}\setlength{\parsep}{0pt}\setlength{\itemsep}{4pt}
\item The shortest code with $\rho_{\maxfrac}(\cC)>r$ is the unique
  $[6,3,3]$ code $\cC$.  We have $\rho_{\maxfrac}(\cC)=4>3$.
\item There are two codes of length~$7$ with $\rho_{\maxfrac}(\cC)>r$.
  These are the $[7,4,3]$ Hamming code and the $[7,3,4]$ simplex code,
  which have both pseudocodeword redundancy~$7$.  In both cases, there
  is, up to equivalence, a unique parity-check matrix $\bH$ with seven
  rows that satisfies $d(\cC)=w^{\min}_{\maxfrac}(\bH)$.  

  (This demonstrates that Proposition~5.4 and~5.5 in \cite{ISIT-Paper}
  are sharp for the max-fractional weight, and that the parity-check
  matrices constructed in the proofs are unique in this case.)
\item For the $[8,4,4]$ extended Hamming code $\cC$ we have
  $\rho_{\maxfrac}(C)=\infty$, and thus the code is in class~$0$ for
  the max-fractional weight.  It is the shortest code with infinite
  $\rho_{\maxfrac}(\cC)$.  

  (It can be checked that $\bldx=[1,1,1,1,1,1,1,3]$ is a
  pseudocodeword in $\cK(\bH)$, where the rows of $\bH$ consist of all
  dual codewords; since $\w_{\maxfrac}(\bldx)=\frac{10}{3}<4$, we have
  $w^{\min}_{\maxfrac}(\bH)<4$.)
\item There are two other codes of length~$8$ with
  $\rho_{\maxfrac}(\cC)>r$, namely two of the three $[8,3,4]$ codes,
  having pseudocodeword redundancy~$6$ and~$8$, respectively.
\end{itemize}

\subsubsection{Comparison}

Comparing the results for the AWGNC and BSC pseudoweights, and the
max-fractional weight, we can summarize the results as follows.

\begin{itemize}\setlength{\parsep}{0pt}\setlength{\itemsep}{4pt}
\item For the $[7,4,3]$ Hamming code $\cC$ we have
  $\rho_{\AWGNC}(\cC)=r=3$, $\rho_{\BSC}(\cC)=4$, and
  $\rho_{\maxfrac}(\cC)=7$.
\item For the $[7,3,4]$ simplex code $\cC$ we have
  $\rho_{\AWGNC}(\cC)=r=4$, $\rho_{\BSC}(\cC)=5$, and
  $\rho_{\maxfrac}(\cC)=7$.  
\item For the $[8,4,4]$ extended Hamming code $\cC$ we have
  $\rho_{\AWGNC}(\cC)=5$, $\rho_{\BSC}(\cC)=6$, and
  $\rho_{\maxfrac}(\cC)=\infty$.  This code $\cC$ is the shortest
  one such that $\rho_{\AWGNC}(\cC)>r$, and also the shortest one
  such that $\rho_{\maxfrac}(\cC)=\infty$.
\item If $d\geq 3$ then for \emph{every} parity-check matrix $\bH$ we
  have $\w^{\min}_{\AWGNC}(\bH)\geq 3$.  This is not true for the BSC
  and the max-fractional weight.
\end{itemize}

These observations show that there is some significant difference
between the various types of pseudocodeword redundancies.



\section{Cyclic Codes Meeting the Eigenvalue
  Bound}\label{sec:kv_bound}

In this section we apply the following eigenvalue-based lower bound on
the minimum AWGNC pseudoweight, proved in \cite{KV-lower-bounds}.

\begin{proposition}\label{prop:KV_bound}
  The minimum AWGNC pseudoweight for a $(w_c,w_r)$-regular
  parity-check matrix $\cH$ whose corresponding Tanner graph is
  connected is bounded below by
  \begin{equation}\label{eq:KV_bound}
    \weight_{\AWGNC}^{\min} \ge n \cdot \frac{ 2w_c - \mu_2 } 
    {\mu_1 - \mu_2 } \; ,
  \end{equation}
  where $\mu_1$ and $\mu_2$ denote the largest and second largest
  eigenvalue (respectively) of the matrix $\cL \define \cH^T \cH$,
  considered as a matrix over the real numbers.
\end{proposition}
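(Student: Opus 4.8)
The plan is to bound $\weight_{\AWGNC}(\bldx) = (\ones\bldx^T)^2/\|\bldx\|^2$ from below for every nonzero $\bldx\in\cK(\cH)$ using spectral information about $\cL=\cH^T\cH$; taking the minimum over $\bldx$ then yields the asserted bound on $\weight_{\AWGNC}^{\min}$. First I would record the spectral preliminaries forced by regularity and connectivity. Since every row of $\cH$ has $w_r$ ones and every column has $w_c$ ones, a direct computation with the row and column sums gives $\cL\ones^T = \cH^T\cH\ones^T = w_cw_r\,\ones^T$, so $\ones$ is an eigenvector of $\cL$. Because $\cH$ has no zero columns and its Tanner graph is connected, $\cL$ is a symmetric, nonnegative, irreducible matrix; by the Perron--Frobenius theorem its strictly positive eigenvector $\ones$ is, up to scaling, the unique Perron eigenvector, so $\bldv_1\define\ones/\sqrt n$ spans the $\mu_1$-eigenspace, the eigenvalue $\mu_1$ is simple, and in particular $\mu_1 > \mu_2$, so that $\mu_1-\mu_2 > 0$.

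The crux of the argument, and the only place where the fundamental cone enters, is the inequality $\bldx\cL\bldx^T \ge 2w_c\,\|\bldx\|^2$ for every $\bldx\in\cK(\cH)$. Setting $\sigma_j\define\sum_{i\in\cI_j}x_i$, I would write $\bldx\cL\bldx^T = \|\cH\bldx^T\|^2 = \sum_{j\in\cJ}\sigma_j^2$. Constraint~(\ref{eq:polytope-inequality-1}) says precisely that $\sigma_j\ge 2x_\ell$ for every $j\in\cJ$ and every $\ell\in\cI_j$; since also $x_\ell\ge 0$, each summand in the identity $\sigma_j^2 = \sum_{\ell\in\cI_j}x_\ell\,\sigma_j$ satisfies $x_\ell\,\sigma_j\ge 2x_\ell^2$, whence $\sigma_j^2\ge 2\sum_{\ell\in\cI_j}x_\ell^2$. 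Summing over $j$ and using that each coordinate $\ell$ lies in exactly $w_c$ rows (so $|\cJ_\ell|=w_c$) gives $\sum_j\sigma_j^2\ge 2w_c\sum_\ell x_\ell^2$, which is the desired inequality.

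Finally I would expand $\bldx$ in an orthonormal eigenbasis $\bldv_1,\dots,\bldv_n$ of $\cL$, with eigenvalues $\mu_1>\mu_2\ge\cdots\ge\mu_n\ge 0$, and put $c_k\define\langle\bldx,\bldv_k\rangle$. Then $\|\bldx\|^2=\sum_k c_k^2$, $\ones\bldx^T=\sqrt n\,c_1$ (as $\bldv_1=\ones/\sqrt n$), and, since $\mu_k\le\mu_2$ for $k\ge 2$,
\[ \bldx\cL\bldx^T = \sum_k\mu_kc_k^2 \le (\mu_1-\mu_2)\,c_1^2 + \mu_2\sum_k c_k^2 = (\mu_1-\mu_2)\,c_1^2 + \mu_2\,\|\bldx\|^2 \;. \]
Combining this with the key inequality $\bldx\cL\bldx^T\ge 2w_c\|\bldx\|^2$ and rearranging yields $(\mu_1-\mu_2)c_1^2 \ge (2w_c-\mu_2)\|\bldx\|^2$, and therefore
\[ \weight_{\AWGNC}(\bldx) = \frac{(\ones\bldx^T)^2}{\|\bldx\|^2} = \frac{n\,c_1^2}{\|\bldx\|^2} \ge n\cdot\frac{2w_c-\mu_2}{\mu_1-\mu_2} \;. \]

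The main obstacle is isolating the intermediate bound $\bldx\cL\bldx^T\ge 2w_c\|\bldx\|^2$: recognizing that the check-side quadratic form $\sum_j\sigma_j^2$ equals $\bldx\cL\bldx^T$, and that the cone inequalities feed into it termwise through $\sigma_j\ge 2x_\ell$, is the one genuinely nonroutine step. Both the Perron--Frobenius preliminaries and the concluding Rayleigh-type estimate are then standard. A minor point: if $2w_c-\mu_2\le 0$ the asserted bound is vacuous, since pseudoweights are strictly positive, so there is no loss in assuming $2w_c>\mu_2$.
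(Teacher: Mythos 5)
Your proof is correct: the Perron--Frobenius argument (connectivity of the Tanner graph gives irreducibility of the nonnegative matrix $\cL$, so the positive eigenvector $\ones$ spans a simple top eigenspace with $\mu_1=w_cw_r>\mu_2$), the key cone inequality $\bldx\cL\bldx^T=\sum_j\sigma_j^2\ge 2w_c\|\bldx\|^2$ obtained termwise from $\sigma_j\ge 2x_\ell$, and the concluding Rayleigh-quotient estimate along $\ones$ all check out, and together they do yield the stated bound for every nonzero pseudocodeword. Note that the paper itself states Proposition~\ref{prop:KV_bound} without proof, citing~\cite{KV-lower-bounds}; your argument is essentially the standard Vontobel--Koetter derivation from that reference (the pseudocodeword analogue of Tanner's eigenvalue bound on minimum distance, with the fundamental-cone constraints replacing the parity conditions), so it matches the intended proof rather than offering a different route.
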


We consider now binary cyclic codes with full circulant parity-check
matrices, defined as follows: Let $\cC$ be a binary cyclic code of
length $n$ with check polynomial $h(x)=\sum_{i\in\cI}h_ix^i$ (cf.\
\cite{MacWilliams_Sloane}, p.~194). Then the \emph{full circulant
  parity-check matrix} for~$\cC$ is the $n\times n$ matrix
$\cH=(H_{j,i})_{i,j\in \cI}$ with entries $H_{j,i}=h_{j-i}$.  Here,
all the indices are modulo $n$, so that $\cI=\{0,1,\dots,n-1\}$.

Since such a matrix is $w$-regular, where $w=\sum_{i\in\cI}h_i$, we
may use the eigenvalue-based lower bound of
Proposition~\ref{prop:KV_bound} to examine the AWGNC pseudocodeword
redundancy: If the right hand side equals the minimum distance $d$ of
the code~$\cC$, then $\rho_{\AWGNC}(\cC)\leq n$.

Note that the largest eigenvalue of the matrix $\cL = \cH^T \cH$ is
$\mu_1=w^2$, since every row weight of $\cL$ equals
$\sum_{i,j\in\cI}h_ih_j=w^2$.  Consequently, the eigenvalue
bound is
\[ \weight_{\AWGNC}^{\min} \geq n \cdot \frac{2w-\mu_2}{w^2-\mu_2}
\;, \] where $\mu_2$ is the second largest eigenvalue of $\cL$.  We
remark further that $\cL=(L_{j,i})_{i,j\in\cI}$ is a symmetric
circulant matrix, with $L_{j,i}=\ell_{j-i}$ and
$\ell_i=\sum_{k\in\cI}h_kh_{k+i}$.  The eigenvalues of $\cL$ are thus
given by \[ \lambda_j = \sum_i\ell_i\zeta_n^{ij} =
\operatorname{Re}\sum_i\ell_i\zeta_n^{ij} = \sum_i\ell_i\cos(2\pi
ij/n) \] for $j\in\cI$, where $\zeta_n=\exp(2\pi\bldi/n)$,
$\bldi^2=-1$, is the $n$-th root of unity (see e.g.~\cite{Davis-book},
Theorem~3.2.2).

We also consider quasi-cyclic codes of the form given in the following
remark.
\begin{remark}\label{rem:qc}
  Denote by $\ones_m$ the $m\times m$ matrix with all entries equal to
  $1$.  If $\cH$ is a $w$-regular circulant $n\times n$ matrix then
  the Kronecker product $\tilde{\cH} \define \cH\otimes\ones_m$ will
  be a $w$-regular circulant $mn\times mn$-matrix and defines a
  quasi-cyclic code.  We have
  \[ \tilde{\cL} = \tilde{\cH}^T\tilde{\cH} = \cH^T\cH \otimes
  \ones_m^T\ones_m = \cL \otimes (m\ones_m)\:,\] and the eigenvalues
  of $m\ones_m$ are $m^2$ and $0$.  Thus, the largest eigenvalues of
  $\tilde{\cL}$ are $\tilde{\mu}_1 = m^2\mu_1 = m^2w^2$ and
  $\tilde{\mu}_2 = m^2\mu_2$, and the eigenvalue bound of
  Proposition~\ref{prop:KV_bound} becomes
  \[ \weight_{\AWGNC}^{\min} \ge mn \cdot \frac{2mw-m^2\mu_2}{m^2w^2-m^2\mu_2} = 
  n \cdot \frac{2w-m\mu_2}{w^2-\mu_2} \;. \]
\end{remark}\medskip

We carried out an exhaustive search on all cyclic codes $\cC$ up to
length $n\leq 250$ and computed the eigenvalue bound in all
cases where the Tanner graph of the full circulant parity-check matrix
is connected, by using the following algorithm:

\hrulefill\medskip

\textbf{Input:} Parameter $n$ (code length).

\textbf{Output:} For all divisors of $x^n-1$, corresponding to cyclic
codes $\cC$ with full circulant parity-check matrix, such that the
Tanner graph is connected: the value of the eigenvalue bound.\medskip

\begin{enumerate}
\item Factor $x^n-1$ over~$\ff_2$ into irreducibles, using Cantor and
  Zassenhaus' algorithm (cf.~\cite{vzGG-book}, Section~14.3).
\item For each divisor $f(x)$ of $x^n-1$:
\begin{enumerate}
\item Let $f(x)=\sum_ih_ix^i$ and $\cH=(h_{j-i})_{i,j\in\cI}$.
\item Check that the corresponding Tanner graph is connected (that the
  gcd of the indices $i$ with $h_i=1$ together with $n$ is~$1$).
\item Compute the eigenvalues of~$\cL=\cH^T\cH$: Let
  $\ell_i=\sum_{k\in\cI}h_kh_{k+i}$ and for $j\in\cI$ compute
  $\sum_i\ell_i\cos(2\pi ij/n)$.
\item Determine the second largest eigenvalue $\mu_2$ and output
  $n\cdot(2\ell_0-\mu_2)/(\ell_0^2-\mu_2)$.
\end{enumerate}
\end{enumerate}

\hrulefill\medskip

This algorithm was implemented in the C programming language.
Tables~\ref{table:KV_bound-2} and~\ref{table:KV_bound-1} give a
complete list of all cases in which the eigenvalue bound equals the
minimum Hamming distance $d$, for the cases $d=2$ and $d\ge 3$
respectively.  In particular, the AWGNC pseudoweight equals the
minimum Hamming distance in these cases as well and thus we have for
the pseudocodeword redundancy $\rho_{\AWGNC}(\cC)\leq n$.  All
examples of distance $2$ are actually quasi-cyclic codes as in
Remark~\ref{rem:qc} with parity-check matrix $\tilde{\cH} =
\cH\otimes\ones_2$.  We list here the constituent code given by the
parity-check matrix~$\cH$.

\addtolength{\textheight}{-9.8cm}   

\begin{table}
  \caption{Binary Cyclic Codes up to Length 250 
    with $d=2$ Meeting~the~Eigenvalue~Bound}\label{table:KV_bound-2}
  {\centering
    \begin{tabular}{ccl}
      parameters & $w$-regular & constituent code\vspace{.5mm} \\\hline
      & & \vspace{-2mm} \\
      $[2n,2n\!-\!m,2]$ & $2^m$ & Hamming c., $n=2^m\!-\!1$, $m=2\dots 6$ \\
      \!\!\!$[2n,2n\!-\!m\!-\!1,2]$\!\!\! & 
      \!\!\!$2^m\!-\!2$\!\!\! & Hamming c. with overall 
      p.-check \\
      $[42,32,2]$ & $10$ & projective geometry code $PG(2,4)$ \\
      $[146,118,2]$ & $18$ & projective geometry code $PG(2,8)$ \\
      $[170,153,2]$ & $42$ & a certain $[85,68,\geq\!6]$ $21$-regular code \\
      & & (the eigenvalue bound is 5.2) \\
    \end{tabular}\\}%
\end{table}

\begin{table}
  \caption{Binary Cyclic Codes up to Length 250 
    with $d\geq 3$ Meeting~the~Eigenvalue~Bound}\label{table:KV_bound-1}
  {\centering
    \begin{tabular}{ccl}
      parameters & $w$-regular & comments\vspace{.5mm} \\\hline
      & & \vspace{-2mm} \\
      $[n,1,n]$ & $2$ & repetition code, $n=3\dots 250$ \\
      $[n,n\!-\!m,3]$ & \!\!$2^{m-1}$\!\! 
      & Hamming c., $n=2^m\!-\!1$, $m=3\dots 7$ \\
      $[7,3,4]$ & $3$ & dual of the $[7,4,3]$ Hamming code \\
      $[15,7,5]$ & $4$ & Euclidean geometry code EG(2,4) \\
      $[21,11,6]$ & $5$ & projective geometry code PG(2,4) \\
      $[63,37,9]$ & $8$ & Euclidean geometry code EG(2,8) \\
      $[73,45,10]$ & $9$ & projective geometry code PG(2,8) \\
    \end{tabular}\\}%
\end{table}

We conclude this section by proving a result which was observed by the
experiments.

\begin{lemma}
  Let $m\ge 3$ and let $\cC$ be the intersection of a Hamming code of
  length $n=2^m-1$ with a simple parity-check code of length $n$,
  which is a cyclic $[n,n-m-1,4]$ code.  Consider its full circulant
  parity-check matrix $\cH$.  Then \[ \w^{\min}_{\AWGNC}(\cH)\ge
  3+\frac{1}{2^{m-2}-1}>3 \;. \]

  In particular, if $m=3$ then $\cC$ is the $[7,3,4]$ code and the
  result implies $\w^{\min}_{\AWGNC}(\cH)=4$ and $\rho_{\AWGNC} (\cC) \le n$. 
\end{lemma}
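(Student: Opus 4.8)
The plan is to apply the eigenvalue bound of Proposition~\ref{prop:KV_bound} to the $w$-regular matrix $\cH$, so that the whole argument reduces to identifying the row weight $w$ and the second-largest eigenvalue $\mu_2$ of $\cL=\cH^T\cH$. Writing $g(x)=(x+1)p(x)$ for the generator polynomial of $\cC$, where $p$ is a primitive polynomial of degree $m$, the check polynomial is $h(x)=(x^n-1)/g(x)$ of degree $n-m-1$, and $w$ is the number of nonzero coefficients of $h$. As recalled in the discussion preceding Remark~\ref{rem:qc}, $\cL$ is circulant with entries $\ell_i=\sum_{k\in\cI}h_kh_{k+i}$ and eigenvalues $\lambda_j=|h(\zeta_n^j)|^2$, the largest being $\mu_1=\lambda_0=w^2$. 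First I would pin down $w$, then show that the remaining $\lambda_j$ are all equal, and finally substitute into~\eqref{eq:KV_bound}.

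To determine $w$ I would use the dual code. Since $\cC$ is the intersection of the Hamming code with the even-weight code, $\cC^\perp$ is the sum of the simplex code and the repetition code, a $[2^m-1,m+1]$ code whose nonzero weights are $2^{m-1}-1$, $2^{m-1}$, and $2^m-1$ (the weights of the nonzero simplex words, of their complements $\ones+s$, and of $\ones$ itself). Every row of $\cH$ is a nonzero codeword of $\cC^\perp$ of weight $w$. From $(x+1)p(x)h(x)=x^n-1$ over $\ff_2$ one gets $p(x)h(x)=1+x+\dots+x^{n-1}$, and evaluating at $x=1$ forces $h(1)=1$, so $w$ is odd and hence $w\in\{2^{m-1}-1,\,2^m-1\}$. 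As $\deg h=n-m-1<n-1$, the polynomial $h$ is not the all-ones word, which rules out $w=2^m-1$; therefore $w=2^{m-1}-1$.

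The key step is to show that the off-diagonal autocorrelations $\ell_i$ (for $i\neq0$) are all equal. Here I would invoke the combinatorial structure of the support $D=\{k:h_k\neq0\}$: via the trace description of the cyclic code $\cC^\perp$, using the trace $\mathrm{Tr}$ from $\ff_{2^m}$ to $\ff_2$ and a primitive element $\alpha$, the weight-$(2^{m-1}-1)$ codewords are exactly the indicators of the trace-zero sets $\{k:\mathrm{Tr}(\beta\alpha^k)=0\}$, which are the classical Singer difference sets with parameters $(2^m-1,\,2^{m-1}-1,\,2^{m-2}-1)$. The difference-set property gives $\ell_i=2^{m-2}-1$ for every $i\neq0$, while $\ell_0=w=2^{m-1}-1$. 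Consequently $\cL=(2^{m-2}-1)\,\ones_n+2^{m-2}\,\bldI$, whose eigenvalues are $w^2$ (once) and $2^{m-2}$ (with multiplicity $n-1$), so that $\mu_2=2^{m-2}$. I expect this identification of $D$ as a difference set to be the main obstacle, since it is the one place where the $\ff_2$-structure of $h$ must be converted into an exact integer autocorrelation.

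Finally I would substitute $n=2^m-1$, $w=2^{m-1}-1$ and $\mu_2=2^{m-2}$ into~\eqref{eq:KV_bound}. Setting $q=2^{m-2}$, so that $n=4q-1$ and $w=2q-1$, the right-hand side becomes $n\,(2w-\mu_2)/(w^2-\mu_2)=(4q-1)(3q-2)/\bigl((4q-1)(q-1)\bigr)=(3q-2)/(q-1)=3+1/(q-1)$, which is exactly $3+1/(2^{m-2}-1)>3$. For $m=3$ this equals $4$; since $\weight_{\AWGNC}^{\min}(\cH)$ is always at most $d=4$, equality holds, giving $\weight_{\AWGNC}^{\min}(\cH)=4$ and hence $\rho_{\AWGNC}(\cC)\le n$.
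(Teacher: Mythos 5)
Your proof is correct, and its overall skeleton necessarily matches the paper's: both reduce the lemma to identifying $w=2^{m-1}-1$ and $\mu_2=2^{m-2}$, then substitute into the eigenvalue bound of Proposition~\ref{prop:KV_bound}; your closing arithmetic (with $q=2^{m-2}$) and the $m=3$ conclusion agree with the paper exactly. Where you genuinely diverge is in the two intermediate facts. For the oddness of $w$ you evaluate $p(x)h(x)=1+x+\dots+x^{n-1}$ at $x=1$, whereas the paper argues that an even $w$ would force every codeword of $\cC^\perp$ (spanned by the rows of $\cH$) to have even weight, contradicting the odd-weight words in $\cC^\perp$; both are sound. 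For the key step $\ell_i=2^{m-2}-1$ for $i\neq 0$, you identify the support of $h$ with a trace-zero (hyperplane) set and invoke the classical Singer difference-set property with parameters $(2^m-1,\,2^{m-1}-1,\,2^{m-2}-1)$, while the paper avoids any such machinery: two distinct rows $\bldh_1,\bldh_2$ of $\cH$ are equal-weight codewords of $\cC^\perp$, so their distance is even and hence equals $2^{m-1}$, the only even value among the nonzero weights $2^{m-1}-1$, $2^{m-1}$, $2^m-1$ of $\cC^\perp$; this forces $|\mathrm{supp}(\bldh_1)\cap \mathrm{supp}(\bldh_2)|=2^{m-2}-1$. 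Your route rests on a heavier imported fact, which you correctly flag as the main obstacle and should either cite precisely or prove inline --- e.g.\ by observing that for $i\neq 0$ the set $\{x\neq 0 : \mathrm{Tr}(\beta x)=\mathrm{Tr}(\beta\alpha^i x)=0\}$ is the set of nonzero points of an intersection of two distinct hyperplanes and so has $2^{m-2}-1$ elements. In exchange, it gives structural insight and would generalize to any code whose dual's minimum-weight words form a difference set, whereas the paper's argument is more elementary, using only the weight distribution of $\cC^\perp$ and a parity observation.
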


\begin{proof}
  Let $\cH$ be the $w$-regular full circulant parity-check matrix for
  $\cC$.  We claim that $w=2^{m-1}\!-\!1$.  Indeed, each row $\bldh$
  of $\cH$ is a codeword of the dual code $\cC^{\bot}$, and since
  $\cC^{\bot}$ consists of the codewords of the simplex code and their
  complements, the weight of $\bldh$ and thus $w$ must be
  $2^{m-1}\!-\!1$, $2^{m-1}$, or $2^m\!-\!1$.  But $w$ cannot be even,
  for otherwise all codewords of $\cC^{\bot}$ would be of even weight.
  As $w=2^m\!-\!1$ is clearly impossible, it must hold
  $w=2^{m-1}\!-\!1$.

  Next, we show that the second largest eigenvalue of
  $\cL=\cH^T\cH=(L_{j,i})_{i,j\in\cI}$ equals $\mu_2=2^{m-2}$.
  Indeed, let $\bldh_1$ and $\bldh_2$ be different rows of $\cH$,
  representing codewords of $\cC^{\bot}$.  As their weight is equal,
  their Hamming distance is even, and thus it must be $2^{m-1}$.
  Hence, the size of the intersection of the supports of $\bldh_1$ and
  $\bldh_2$ is $2^{m-2}\!-\!1$.  This implies that $L_{i,i}=w$ and
  $L_{j,i}=2^{m-2}\!-\!1$, for $i\ne j$.  Consequently, $\cL$ has an
  eigenvalue of multiplicity $n-1$, namely
  $w-(2^{m-2}\!-\!1)=2^{m-2}$, and thus $\mu_2$ must be $2^{m-2}$.

  Finally, we apply Proposition~\ref{prop:KV_bound} to get
  \[ \w^{\min}_{\AWGNC} 
  \ge (2^m\!-\!1)\,
  \frac{2\,(2^{m-1}\!-\!1)-2^{m-2}}{(2^{m-1}\!-\!1)^2-2^{m-2}} 
  = 3 + \frac{1}{2^{m-2}\!-\!1} \;. \]
\end{proof}

\section*{Acknowledgments}

The authors would like to thank Nigel Boston, Christine Kelley and
Pascal Vontobel for helpful discussions.


\end{document}